\documentclass{article}
\usepackage{graphicx}   
\usepackage{amsmath}    
\usepackage{amssymb}    
\usepackage{booktabs}   
\usepackage{float}      
\usepackage{caption}    
\usepackage{indentfirst}
\usepackage{hyperref}   
\usepackage{tabularx}
\usepackage{threeparttable}
\usepackage{natbib}
\usepackage{authblk}
\usepackage{amsthm}

\newtheorem{proposition}{Proposition}
\usepackage{placeins}
\usepackage{adjustbox}
\usepackage{lscape}
\usepackage{adjustbox}
\usepackage{array}
\usepackage{float}
\usepackage{geometry}
\title{Liquidity Premium and Investment Horizons}
\author[1]{Irene Aldridge\thanks{Email: irene.aldridge@gmail.com}}
\date{}

\begin{document}

\maketitle

\begin{abstract}
We estimate Kyle's (1985) price-impact coefficient $\lambda$ directly from daily equity order flow and test its ability to forecast the cross-section of subsequent stock returns. Using CRSP data from 2020 to 2025, we construct firm-month measures of signed order flow and two estimators of $\hat\lambda_{it}$: a within-month price-impact regression and an Amihud-style ratio. Signed order flow strongly predicts contemporaneous and one-month-ahead returns, while volume volatility predicts lower subsequent returns, consistent with widening price impact degrading price discovery. Fama-MacBeth regressions confirm that our order-flow signal carries significant cross-sectional return information after Newey--West adjustment. Theoretically, we resolve the liquidity premium puzzle of Constantinides (1986) through an adverse-selection mechanism: low order flow widens $\lambda$ and depresses prices today; subsequent normalization restores prices, generating the illiquidity premium without risk-based compensation.
\end{abstract}

\section{Introduction}
 
Equity returns exhibit return predictability tied to trading
activity that classical asset-pricing models struggle to explain. A long empirical literature documents that trading volume, order flow, and measures of illiquidity carry information about subsequent stock returns. However, this evidence has developed largely independently of the equilibrium models that explain \emph{why} trading activity should move prices in the first place. We bring these two strands together: we estimate the price-impact coefficient from \cite{Kyle1985} model of strategic informed trading directly from observable equity order flow and test its ability to forecast the cross-section of subsequent stock returns.
 
In Kyle's model, a single informed trader submits orders against a competitive, risk-neutral market maker who observes only aggregate order flow and prices the asset as a linear function of that flow. The resulting equilibrium price-impact coefficient, $\lambda$, governs how strongly a given unit of net order flow moves prices, and is determined by the ratio of noise-trader variance to fundamental uncertainty about the asset's value. We argue that this equilibrium object, estimated from daily equity trading data, predicts subsequent returns. A stock with a high $\lambda$ is one whose order flow is, in equilibrium, more informative, more costly to trade against, and more sensitive to shifts in investor demand.
 
We propose a resolution to a long-standing puzzle in this
literature: why does trading volume predict returns, and through
what mechanism does illiquidity command a premium? Existing
empirical work documents a robust association between volume,
order flow, and subsequent returns (e.g., \cite{LoWang2015}, \cite{Bajzik2021}), but does not, on its own, provide a structural account of why this association should hold or how to translate it into a tractable equilibrium pricing model. Separately, a large literature on equity liquidity premia (\cite{AmihudMendelson1986}; \cite{PastorStambaugh2003}; \cite{AcharyaPedersen2005}) establishes that illiquidity is priced, but typically treats the illiquidity premium as compensation for risk borne by long-horizon investors, leaving
open the puzzle first posed by \cite{Constantinides1986}: if
infinite-horizon investors should be indifferent to transaction
costs, why is the observed illiquidity premium so large in practice?
 
We propose that both questions have a common answer rooted in
market microstructure theory. \cite{Kyle1985}'s equilibrium implies that $\lambda$ widens precisely when the ratio of informed to noise trading is high relative to fundamental uncertainty, discouraging participation and depressing prices. As order flow normalizes, $\lambda$ narrows and prices recover. This generates a return differential between low- and high-order-flow states that requires no counterparty to knowingly bear a liquidity cost on investors' behalf. The $\lambda$ synamics further offer a resolution to the liquidity premium puzzle grounded in adverse selection rather than risk compensation.
 
We test four predictions of this framework, each tied to a formal proposition developed in Section 3. We develop signed (directional) order flow as the empirical analog of Kyle's latent order flow variable $y$. First, we posit that signed order flow should positively and significantly predict stock returns since price moves linearly in net order flow in Kyle's equilibrium (Proposition 3). Second, signed order flow should
dominate unsigned, aggregate trading volume as a return predictor since $\lambda$ operates on signed flow rather than raw volume. Third, volume volatility, which we use to proxy the noise-trading variance that determines $\lambda$ in equilibrium (Proposition 1), should predict \emph{lower} subsequent returns since higher noise-trading variance narrows $\lambda$ and degrades the precision of price discovery. Fourth, the predictive content of order flow and $\lambda$ should be strongest at short horizons and high trading frequencies, since the continuous-time extension of Kyle's model implies that price impact rises as the date of full information
revelation approaches (Proposition 4).
 
Using CRSP daily and monthly equity data for the entire CRSP universe from 2020 to 2025, we construct firm-month measures of total trading volume, volume volatility, and signed order flow, along with two estimators of $\hat\lambda_{it}$: a direct within-month regression of price changes on signed order flow and an \cite{Amihud2002}-style illiquidity ratio. We document four main findings.
 
First, in panel regressions of monthly stock returns on our
volume-based predictors, signed order flow is a strongly significant predictor of both contemporaneous and one-month-ahead returns, robust to the inclusion of standard equity controls (size, book-to-market, momentum, and Amihud illiquidity), while volume volatility predicts returns with a negative sign, consistent with our model's prediction that elevated noise-trading variance narrows $\lambda$ and weakens price discovery.
 
Second, return-prediction regressions based directly on
$\hat\lambda_{it}$ yield informative but specification-sensitive
results: the within-month price-impact regression estimator enters significantly and with a large coefficient in our baseline specification with an intercept. However, the sign and magnitude of this relationship vary with specification choices that we investigate as a robustness question in Section 6.
 
Third, Fama--MacBeth cross-sectional regressions of next-month
returns on our predicted-return signal yield a statistically
significant average slope, after a Newey--West adjustment, for both our Amihud-style and regression-based $\hat\lambda_{it}$ estimators, indicating that order-flow-based illiquidity measures carry reliable cross-sectional information about future returns over our sample period.
 
Fourth, we provide a microstructure-based resolution to the
liquidity premium puzzle: rather than positing a counterparty who knowingly bears the cost of holding illiquid assets, we show that low signed order flow widens $\lambda$ and depresses prices in equilibrium, while subsequent normalization of order flow narrows $\lambda$ and restores prices, generating the historically observed illiquidity premium as a consequence of equilibrium price impact rather than risk-based compensation.
 
Our work makes three contributions to asset pricing theory and
practice. Methodologically, we provide a direct empirical
implementation of \cite{Kyle1985}'s price-impact coefficient using observable daily equity trading data, estimated separately via a regression-based approach and an Amihud-style approach, and show that both carry information for the cross-section of subsequent returns. This bridges the equilibrium market microstructure literature (\cite{Kyle1985}; \cite{GlostenMilgrom1985}; \cite{EasleyEtAl1996}) with the
empirical literature on volume and illiquidity in asset pricing
(\cite{Amihud2002}; \cite{PastorStambaugh2003}). Theoretically, we resolve the liquidity premium puzzle through a demand-based 
adverse-selection mechanism rather than a risk-compensation
mechanism. In doing so, we provide microfoundations for why illiquid assets earn higher returns without requiring implausible assumptions about investor preferences or market segmentation. Empirically, we demonstrate that an equilibrium, microstructure-founded measure of price impact, estimated entirely from CRSP daily price and volume data, has predictive content for monthly stock returns distinct from standard equity characteristics, with direct implications for portfolio construction and market-making.
 
The paper proceeds as follows. Section 2 reviews the related
literature in market microstructure theory, order-flow
informativeness, and equity liquidity premia. Section 3 develops our theoretical framework, presenting \cite{Kyle1985}'s equilibrium and four formal propositions on order flow, price impact, and return predictability. Section 4 describes our CRSP-based data construction and summary statistics. Section 5 presents our baseline order-flow
and stock-return regressions. Section 6 develops firm-month
estimates of Kyle's $\lambda$ and tests their ability to forecast returns. Section 7 compares alternative constructions of $\hat\lambda_{it}$. Section 8 presents an expanding-window
out-of-sample forecasting exercise, together with Fama--MacBeth and portfolio-sort robustness checks. Section 9 discusses economic mechanisms and practical implications for portfolio managers and market makers. Section 10 concludes.
 
\section{Related Literature}\label{section:related-lit}

Our study touches upon four research streams: (1)
market microstructure models of price formation under information asymmetry, (2) empirical measures of order flow and informed trading, (3) the role of trading volume in asset pricing, and (4) liquidity premia in equity markets. We contribute by showing that the equilibrium price-impact coefficient from \cite{Kyle1985}-style microstructure models, estimated from observable equity order flow, predicts the cross-section of subsequent stock returns and by offering a microstructure-based resolution to the liquidity premium
puzzle grounded in adverse selection rather than risk compensation.
 
\subsection{Market Microstructure Models of Price Formation}
 
Two complementary modeling traditions establish how private
information becomes impounded into asset prices through trading.
\citet{Kyle1985} models a single informed trader submitting market orders against a competitive, risk-neutral market maker who observes only aggregate order flow; in the resulting linear
equilibrium, price moves linearly in net order flow with a slope
of $\lambda$, the now-canonical measure of price impact.
\citet{GlostenMilgrom1985} instead models a sequence of individual trades, in which a risk-neutral, zero-expected-profit dealer quotes bid and ask prices that widen purely to compensate for adverse selection from privately informed counterparties. \citet{BackBaruch2004} unify the two traditions, showing that Kyle's continuous-time linear equilibrium is the limit of the Glosten--Milgrom sequential-trade equilibrium as trade sizes shrink and arrival rates increase, while \citet{Back1992} extends Kyle's framework to continuous time directly, yielding the time-varying
price-impact process central to our horizon-dependent predictions (Section~3.5).

\subsection{Order Flow, Trade Informativeness, and the Probability of Informed Trading}
 
A parallel empirical literature develops measures of how much
information is embedded in observed trading, independent of the
particular equilibrium pricing model. \citet{EasleyOHara1987} show that trade \emph{size}, not just trade direction, carries
information: because informed traders prefer to trade larger
quantities when their signal is more valuable, market makers
rationally widen the price impact of large trades, generating a
nonlinear relationship between trade size and price response.
\citet{EasleyOHara1992} extend the informational content of trading to the \emph{time} dimension, showing that the absence of trading itself is informative. Intervals without trades signal the absence of new information, so the timing of trade arrivals, not only their size and direction, affects the speed and magnitude of price adjustment.
 
Building on this framework, \citet{EasleyKieferOHaraPaperman1996} develop the probability of informed trading (PIN) measure, a structural estimate of the fraction of order flow attributable to informed traders for a given stock. The estimate is obtained from the sequence of buy and sell trade arrivals. The central empirical finding of \cite{EasleyEtAl1996} is that the probability of information-based trading is systematically lower for high-volume stocks, is the direct empirical counterpart of our Proposition~1: higher trading volume is associated with a less favorable informed-to-noise trading ratio $\sigma_u/\sqrt{\Sigma_0}$, and hence a smaller equilibrium $\lambda$. Our use of signed order flow as a proxy for Kyle's latent $y_t = x_t + u_t$ is conceptually continuous with this literature's effort to recover the unobservable informed-trading component from observable trade data.
 
\subsection{Price Discovery and Information Shares}
 
A separate strand of the microstructure literature asks not how
much information trading contains, but \emph{where} that
information is impounded into price when a security trades across multiple venues or related instruments. \citet{Hasbrouck1995} develops the information-share methodology, decomposing the innovation in a security's common efficient price into the proportional contributions of each market in which it trades. While our setting is single-market rather than multi-market, the underlying logic is directly relevant: just as Hasbrouck's information share isolates the fraction of price discovery attributable to a given venue's order flow, our $\hat\lambda_{it}$ estimates isolate the fraction of a stock's monthly price variation attributable to signed order flow as opposed to noise-driven trading. This connection motivates a natural robustness check in our empirical design (Section~9): decomposing realized return
variance into a signed-order-flow component and a residual
component, in the spirit of an information-share calculation applied to a single market over time rather than across markets at a point in time.
 
\subsection{Trading Volume in Asset Pricing}
 
The dynamics of trading volume are well documented empirically but remain incompletely understood theoretically. \citet{LoWang2015} study volume measures in a portfolio-theoretic context and find that volume does not reduce to a simple measure of institutional flows. \citet{LoMamayskyWang2000} examine the informational content of
volume from a technical-analysis perspective, finding that price
trends accompanied by rising volume are more persistent than price trends unaccompanied by volume. Relevant to our setting,
\citet{EasleyOHaraSrinivas1998} and \citet{PanPoteshman2006} show that option trading volume contains information about subsequent prices of the underlying asset, which \citet{PanPoteshman2006} attribute to informed traders exploiting implicit leverage in option markets. We document a related but distinct channel: rather than volume in a derivative market predicting the underlying, we show that \emph{signed} volume in the underlying equity itself,
filtered through a Kyle-style price-impact estimate, predicts the equity's own subsequent returns.
 
\citet{Bajzik2021} synthesizes 44 studies on the volume--return
relationship and reports that the impact of volume weakens as stocks mature and their liquidity stabilizes. This finding is consistent with our Proposition~1, in which $\lambda$ depends on the time-varying ratio of informed to noise trading rather than on volume in isolation. \citet{BaiEtAl2024} show that volume normalized by amount outstanding helps investors navigate liquidity conditions across related instruments. More broadly, \citet{KoijenYogo2019} and \citet{CongLiWang2020} motivate pricing financial instruments as a function of their usage rather than purely their fundamentals. We adopt a related premise, treating observed trading activity as informative about price formation in its own right rather than as a
secondary characteristic.
 
The existing literature documents volume--return correlations but does not, on its own, provide a structural account of \emph{why} volume should predict returns or how to translate that correlation into a tractable pricing model. By grounding our volume measures in Kyle's equilibrium, we supply both the missing theoretical foundation and a direct empirical implementation.
 
\subsection{Liquidity and Liquidity Premia}
 
A substantial literature establishes that liquidity is a priced
state variable in equity markets. \citet{AmihudMendelson1986} show that stocks with wider bid--ask spreads earn higher average returns, a result extended using alternative liquidity proxies including turnover-based measures and the Amihud illiquidity ratio \citep{Amihud2002}.

\citet{AcharyaPedersen2005} propose a liquidity-adjusted capital asset pricing model in which expected returns depend on both an asset's baseline illiquidity and its covariance with aggregate liquidity shocks, while \citet{PastorStambaugh2003} show that stocks with high sensitivity to market-wide liquidity deterioration command higher average returns, consistent with a systematic liquidity-risk premium. \citet{Johnson2008} distinguishes liquidity (a market's average risk-bearing capacity) from volume (the investor-driven change in that capacity), showing that volume is positively related to the
variance of liquidity. This distinction is directly relevant to our use of volume \emph{volatility} as a proxy for the noise-trading variance $\sigma_u^2$ in Proposition~1. 

\citet{ReichenbacherSchuster2022}
propose size-adapted liquidity measures and show that standard
metrics can underestimate illiquidity costs unless trade size is
properly normalized, while \citet{Cabrol2024} apply machine-learning methods to illiquidity prediction. \citet{BrunnermeierPedersen2009} connect funding liquidity (the ability of intermediaries to borrow) to market liquidity, showing how small funding shocks can generate liquidity spirals through forced deleveraging. Such a mechanism is consistent with the regime-dependent strength of our predicted relationships during stress periods (Section~10).
 
The collective evidence indicates that investors require both a
level premium for holding illiquid assets and a separate premium for exposure to systematic liquidity risk. Our contribution is to show that a structural, equilibrium-based measure of illiquidity carries the same pricing information while also nesting a theoretical account of why that information should be priced in the first place.
 
\subsection{The Liquidity Premium Puzzle}
 
Our paper also contributes to the literature on the liquidity
premium puzzle. \cite{Merton1973} hypothesized that investors should demand higher returns to hold illiquid assets to compensate for transaction costs, but he left open which market participants actually bear that cost. \citet{Constantinides1986} formalized this as a puzzle: in an equilibrium model, the liquidity premium should be small for investors with an infinite trading horizon, yet large liquidity premia are observed in practice. \citet{Chen2022} attributes the discrepancy to heterogeneity in investor information, arguing that better-informed investors are unfazed by transaction costs, while less-informed investors require compensation for illiquidity. \citet{FamaFrench1988} separately establish the empirical practice of decomposing security returns into permanent and transitory components, a decomposition we adapt in interpreting the temporary mispricing mechanism below.
 
Our proposed resolution synthesizes the information-asymmetry
mechanism of \cite{Kyle1985} and \cite{GlostenMilgrom1985} with this puzzle: rather than positing a counterparty who knowingly pays a risk premium to illiquid-asset holders, we show that low signed order flow depresses prices today because Kyle's $\lambda$ widens when noise-trading variance is low relative to informational uncertainty, discouraging participation. Subsequent normalization of order flow and the associated narrowing of $\lambda$ then drives a price recovery, generating a return differential that requires no counterparty to "pay" the premium in equilibrium. This relocates the liquidity premium puzzle's resolution from a risk-compensation story to an adverse-selection and price-impact story, consistent with the market microstructure tradition reviewed above.
 
In sum, while the existing literature establishes (1) equilibrium models of price formation under information asymmetry, (2) empirical techniques for measuring informed trading and price discovery, and (3) the presence of equity liquidity premia, no prior work directly estimates Kyle's $\lambda$ from observable order flow and tests its ability to forecast the cross-section of equity returns within a unified, microstructure-founded framework. Our contribution fills this gap.

\section{The Model} \label{section:the-model}

In this section, we propose an agent-driven pricing model. Specifically, we establish that
\begin{enumerate}
    \item For assets with a finite outstanding amount, short-term informed trading necessarily induces higher trading volume in the long run.  
    \item In addition, in the long run, higher volume translates into higher demand for the asset and results in higher prices.   

\end{enumerate}

Our model comprises the following market structure settings:
\begin{itemize}
    \item Discrete time $t \in \{0, 1, 2, \dots\}$
    \item A single firm with risky assets
    \item A continuum of risk-averse investors with CARA utility
    \item Fixed supply of equities: $s_t$ (no issuance/buybacks during the periods)
    \item Associated equity trades in a competitive market with a volume $V_t$
\end{itemize}

We assume three types of market participants: Short-horizon traders who rebalance monthly and long-horizon buy-and-hold investors. Among both long- and short-term investors, a fixed proportion $\alpha$ is informed. 

We also assume the existence of public and private information. Public information includes disclosed firm fundamentals and debt structure. In addition, some traders are able to observe private firms' signals about near-term liquidity needs. In either case, following \cite{EasleyEtAl1996}, we assume that the trading volume aggregates all public and private information.

\subsection{Single-Period Kyle Model}

\textbf{Market structure.} A single risky asset has a liquidation value $v \sim N(p_0, \Sigma_0)$. One risk-neutral
\emph{informed trader} observes $v$ perfectly and submits a market order $x$. \emph{Noise (liquidity) traders} submit an aggregate order $u \sim N(0,\sigma_u^2)$, independent of $v$. A competitive, risk-neutral \emph{market maker} observes only the total order flow $y = x+u$ (and cannot separate informed from noise flow) and sets the transaction price $p(y)$ to satisfy a zero-expected-profit (semi-strong efficiency) condition.

We restrict attention to linear equilibria of the form
\begin{align}
x &= \beta\,(v - p_0), \qquad \beta > 0, \label{eq:kyle-x} \\
p &= p_0 + \lambda\, y, \qquad \lambda > 0. \label{eq:kyle-p}
\end{align}
The informed trader chooses $x$ to maximize expected trading profit $\mathbb{E}[(v-p)\,x \mid v]$. The market maker chooses $\lambda$ so that the price equals the conditional expectation of value given observed order flow, $p = \mathbb{E}[v \mid y]$.

\begin{proposition}[Kyle's Lambda]
\label{prop:kyle-lambda}
The unique linear equilibrium of \eqref{eq:kyle-x}--\eqref{eq:kyle-p}
satisfies
\begin{equation}
\beta = \frac{\sigma_u}{\sqrt{\Sigma_0}}, \qquad
\lambda = \frac{1}{2}\frac{\sqrt{\Sigma_0}}{\sigma_u},
\label{eq:kyle-lambda}
\end{equation}
and the informed trader's expected profit is
$\pi(v) = (v-p_0)^2/(4\lambda)$. Residual price uncertainty after trading is $\Sigma_1 = \Sigma_0/2$.
\end{proposition}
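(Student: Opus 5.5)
The plan is the standard guess-and-verify argument for a linear equilibrium: derive each side's best response taking the other side's linear rule as given, and solve the resulting fixed point.

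\textbf{Informed trader's best response.} Take the market maker's rule \eqref{eq:kyle-p} as given. The informed trader's expected profit is
\[
\mathbb{E}[(v-p_0-\lambda x-\lambda u)\,x \mid v] = (v-p_0)x - \lambda x^2 ,
\]
using $\mathbb{E}[u\mid v]=0$ by independence. Since $\lambda>0$, this is strictly concave in $x$. The first-order condition gives $x = (v-p_0)/(2\lambda)$, so $\beta = 1/(2\lambda)$. Substituting back, the maximized profit is $(v-p_0)^2/(4\lambda)$, which is the stated $\pi(v)$.

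\textbf{Market maker's pricing rule.} Take \eqref{eq:kyle-x} as given. Then $y = \beta(v-p_0)+u$ is jointly Gaussian with $v$. The projection formula gives
\[
\mathbb{E}[v\mid y] = p_0 + \frac{\mathrm{Cov}(v,y)}{\mathrm{Var}(y)}\,y = p_0 + \frac{\beta\Sigma_0}{\beta^2\Sigma_0+\sigma_u^2}\,y .
\]
Zero expected profit therefore requires $\lambda = \beta\Sigma_0/(\beta^2\Sigma_0+\sigma_u^2)$.

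\textbf{Solving the fixed point.} Substitute $\lambda = 1/(2\beta)$ into this condition to get $\beta^2\Sigma_0+\sigma_u^2 = 2\beta^2\Sigma_0$, i.e.\ $\beta^2 = \sigma_u^2/\Sigma_0$. The restriction $\beta>0$ selects the positive root $\beta = \sigma_u/\sqrt{\Sigma_0}$, and then $\lambda = \tfrac12\sqrt{\Sigma_0}/\sigma_u$. This step also delivers uniqueness within the linear class:
\begin{itemize}
\item any linear equilibrium must satisfy both best-response conditions simultaneously;
\item the combined equation has exactly one positive solution;
\item the sign restriction $\lambda>0$ is what guarantees concavity, and so rules out the degenerate case in which the informed trader would trade infinitely.
\end{itemize}

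\textbf{Residual uncertainty.} The Gaussian conditional-variance formula gives
\[
\Sigma_1 = \mathrm{Var}(v\mid y) = \Sigma_0 - \frac{\beta^2\Sigma_0^2}{\beta^2\Sigma_0+\sigma_u^2}.
\]
With $\beta^2\Sigma_0 = \sigma_u^2$ the fraction equals $\Sigma_0/2$, so $\Sigma_1 = \Sigma_0/2$.

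\textbf{Main obstacle.} No step is technically hard. The only delicate point is stating uniqueness carefully: it holds only within the linear class. The positive root must be selected explicitly, and concavity, which is what makes the first-order condition sufficient, depends on $\lambda>0$.
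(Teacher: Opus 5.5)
Your proof is correct and takes the same route as the paper: you combine the informed trader's first-order condition $\beta = 1/(2\lambda)$ with the projection condition $\lambda = \beta\Sigma_0/(\beta^2\Sigma_0+\sigma_u^2)$ and solve the two jointly. You also carry out the profit computation $\pi(v) = (v-p_0)^2/(4\lambda)$ and the residual-variance computation $\Sigma_1 = \Sigma_0/2$, which the paper's proof asserts without working through, and both are right.
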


\begin{proof}
Substituting the market maker's pricing rule \eqref{eq:kyle-p} into the informed trader's objective, $x$ maximizes
$(v-p_0-\lambda x)\,x$, giving the first-order condition
$x^\ast = (v-p_0)/(2\lambda)$, i.e., $\beta = 1/(2\lambda)$.
Imposing the market maker's rational-expectations pricing condition
\begin{equation*}
\lambda = \frac{\operatorname{Cov}(v,y)}{\operatorname{Var}(y)}
= \frac{\beta\,\Sigma_0}{\beta^2 \Sigma_0 + \sigma_u^2}
\end{equation*}
and solving jointly with $\beta = 1/(2\lambda)$ for $(\beta,\lambda)$ yields \eqref{eq:kyle-lambda}.
\end{proof}

We refer to $\lambda$ in \eqref{eq:kyle-lambda} as \emph{Kyle's
lambda}, the standard measure of price impact per unit of (latent) net order flow. In this paper, Kyle's Lambda is an equilibrium object derived from strategic informed trading.

Our direct theoretical implication is that signed/directional volume (Methods~2--3) outperforms simple aggregate volume (Method~1) in predicting short-horizon returns, as predicted by Kyle's model, since $\lambda$ operates on \emph{signed} order flow $y$, not on unsigned trading volume.

\subsection{Informed Trading Propagates to Future Order Flow}

This section provides a dynamic extension of Kyle's model (\cite{Kyle1985}).

\textbf{Setup.} Trading occurs over $N$ discrete rounds
$n=1,\dots,N$ before $v$ is publicly revealed at $T$. At each round, the informed trader submits $\Delta x_n$, noise traders submit $\Delta u_n \sim N(0,\sigma_u^2 \Delta t)$, and the market maker sets $\Delta p_n = \lambda_n \,\Delta y_n$ based on the cumulative order flow to date.

\begin{proposition}[Propagation of Informed Trading into Order Flow]
\label{prop:propagation}
If informed-trading intensity rises at round $n$ (i.e., the
informed trader's optimal $\beta_n$ increases, for instance, because residual uncertainty $\Sigma_{n-1}$ is still large relative to $\sigma_u$), then expected order flow in subsequent rounds is higher:
\begin{equation}
\mathbb{E}\big[\,|\Delta y_{n+\tau}|\;\big|\;\beta_n\,\big] >
\mathbb{E}\big[\,|\Delta y_{n+\tau}|\;\big|\;\beta_{n-1}\,\big]
\qquad \text{for } \tau = 1,\dots,N-n.
\label{eq:propagation}
\end{equation}
\end{proposition}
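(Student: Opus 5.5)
The plan is to work inside the linear discrete-time Kyle equilibrium, in which $\Delta x_n = \beta_n (v - p_{n-1})\Delta t$ and $\Delta p_n = \lambda_n \Delta y_n$. In that equilibrium, each round's order flow is Gaussian with mean zero conditional on the market maker's information. So $|\Delta y_{n+\tau}|$ has expectation $\sqrt{2/\pi}$ times the standard deviation of $\Delta y_{n+\tau}$. The inequality \eqref{eq:propagation} then reduces to comparing variances:
\[
\operatorname{Var}(\Delta y_{n+\tau}) = \beta_{n+\tau}^2 \Sigma_{n+\tau-1}\Delta t^2 + \sigma_u^2\Delta t .
\]
The noise term does not depend on $\beta_n$. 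The whole claim therefore comes down to showing that the informed component $\beta_{n+\tau}^2\Sigma_{n+\tau-1}$ is increasing in the earlier intensity parameter, read as ``conditioning on $\beta_n$ rather than $\beta_{n-1}$'' being the larger value.

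The first step is to recall the backward-recursion structure behind Proposition~\ref{prop:kyle-lambda}, applied round by round. The market maker's projection gives
\[
\lambda_k = \beta_k\Sigma_k/\sigma_u^2 ,
\]
and the posterior variance updates as
\[
\Sigma_k = \Sigma_{k-1}\,\frac{\sigma_u^2}{\beta_k^2\Sigma_{k-1}\Delta t+\sigma_u^2}.
\]
The informed trader's optimal intensity $\beta_k$ scales like $\sigma_u/\sqrt{\Sigma_{k-1}}$, as in the one-period formula \eqref{eq:kyle-lambda}, up to horizon-dependent constants. Substituting this into the informed-flow variance gives
\[
\beta_k^2\Sigma_{k-1} \propto \sigma_u^2\, c_k ,
\]
where $c_k$ is a constant determined by the remaining horizon. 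The second step is to track how a perturbation raising $\beta_n$ propagates. I would hold the equilibrium rule fixed for later rounds and treat the rise in $\beta_n$ as coming from a larger $\Sigma_{n-1}$, which is the paper's parenthetical mechanism. Because the variance recursion maps a larger prior variance monotonically into a larger posterior variance (the map $\Sigma\mapsto \Sigma\sigma_u^2/(\beta^2\Sigma\Delta t+\sigma_u^2)$ is increasing), every later $\Sigma_{n+\tau-1}$ is larger. As a result, later informed order-flow variances are weakly larger, and $\mathbb{E}|\Delta y_{n+\tau}|$ rises by induction on $\tau$.

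I expect the main obstacle to be the self-normalizing property of Kyle equilibrium. Because $\beta_k\sqrt{\Sigma_{k-1}}$ is pinned to $\sigma_u$, the informed share of order-flow variance is constant in equilibrium, so a pure rise in $\Sigma_{n-1}$ may leave $\operatorname{Var}(\Delta y)$ unchanged and give equality instead of strict inequality. To get strictness, I would interpret the hypothesis as an exogenous, off-equilibrium increase in $\beta_n$ with the market maker's pricing rule taken as given, or as a comparative static in $\Sigma_0$ across economies with the equilibrium rule held fixed. Under that interpretation, higher $\beta_n$ means more trading in round $n$ and, through slower learning when $\lambda_n$ is not adjusted, larger residual $\Sigma_n$. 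That larger $\Sigma_n$ strictly raises $\beta_{n+1}^2\Sigma_n$ and propagates forward. I would state this interpretive assumption explicitly before carrying out the induction.
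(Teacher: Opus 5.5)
\textbf{There is a genuine gap: the step your argument rests on has the wrong sign.} That step is the claim that a higher $\beta_n$ against an unadjusted $\lambda_n$ means ``slower learning'' and a larger residual $\Sigma_n$. Once $\lambda_n$ is decoupled from the intensity actually used, the Bayesian recursion you quote no longer describes the state; it is also decreasing in $\beta_k$. What drives later informed flow $\Delta x_k=\beta_k(v-p_{k-1})\Delta t$ is the actual pricing error. With $p_n=p_{n-1}+\lambda_n\Delta y_n$ and deviating intensity $\beta_n'$,
\[
S_n\equiv\mathbb{E}\big[(v-p_n)^2\big]=(1-\lambda_n\beta_n'\Delta t)^2\,\Sigma_{n-1}+\lambda_n^2\sigma_u^2\Delta t .
\]
In equilibrium $\lambda_n\beta_n\Delta t=1-\Sigma_n/\Sigma_{n-1}\in(0,1)$, so $\partial S_n/\partial\beta_n'<0$. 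Trading harder against a fixed $\lambda_n$ pushes the price \emph{closer} to $v$. For example, take the one-period values of Proposition~\ref{prop:kyle-lambda} with $\Sigma_{n-1}=\sigma_u=\Delta t=1$, so $\beta_n=1$ and $\lambda_n=1/2$. Raising $\beta_n$ to $1.5$ cuts $S_n$ from $0.5$ to $0.3125$.

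An unadjusted $\lambda_n$ is only ``slower'' relative to a market maker who re-optimizes for $\beta_n'$. It is not slower relative to the baseline economy, which is the comparison in \eqref{eq:propagation}. Given the market maker's rule, the deviator's optimal continuation intensities $\beta_k$, $k>n$, are unchanged, because they come from Kyle's backward recursion in the $\lambda$'s alone. The map $S\mapsto(1-\lambda_k\beta_k\Delta t)^2S+\lambda_k^2\sigma_u^2\Delta t$ is increasing, so every $S_{n+\tau-1}$ falls. Since $\Delta y_{n+\tau}$ is mean-zero Gaussian with variance $\beta_{n+\tau}^2S_{n+\tau-1}\Delta t^2+\sigma_u^2\Delta t$, your own $\sqrt{2/\pi}$ reduction then delivers the \emph{reverse} strict inequality.

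\textbf{Your other route is internally inconsistent.} Under the scaling $\beta_k\propto\sigma_u/\sqrt{\Sigma_{k-1}}$ that you correctly invoke, a larger $\Sigma_{n-1}$ \emph{lowers} $\beta_n$, so it cannot be the source of a rise in $\beta_n$. The paper's parenthetical contradicts its own Proposition~\ref{prop:kyle-lambda} here, so you cannot lean on it. Freezing all coefficients and raising $\Sigma_{n-1}$ or $\Sigma_0$ does raise later flow. But then $\beta_n$ has not moved, so that is a statement about residual uncertainty, not intensity. Re-solving the equilibrium instead gives $\operatorname{Var}(\Delta y_k)=\sigma_u^2\Delta t\,\Sigma_{k-1}/\Sigma_k$, which by Kyle's scale invariance depends only on the trading grid. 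That yields equality, which is exactly the self-normalization you identified; that diagnosis is the sound part of your proposal.

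For context, the paper gives no proof of this proposition, only an intuition paragraph. That paragraph says a higher $\beta_n$ \emph{speeds} the resolution of $\Sigma_n$, which agrees with the computation above and is opposite to your step. It then asserts that more ``catching up'' remains, which does not follow. So none of your formalizations yields \eqref{eq:propagation}: equilibrium gives equality, and a unilateral deviation gives the opposite sign. A correct write-up would have to say so, or add a mechanism the statement does not contain. It would also have to replace the conditioning on deterministic coefficients with a well-defined comparison.
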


\emph{Intuition.} A higher $\beta_n$ both (i) directly raises the informed component of flow and (ii) accelerates the market maker's repricing, which, in a sequential-auction equilibrium, raises the rate at which residual uncertainty $\Sigma_n$ is resolved.  Faster resolution sustains higher expected order flow in nearby rounds because price still has more "catching up" left relative to $v$.

\subsection{Order Flow and Price Appreciation}

\begin{proposition}[Price Impact of Order Flow Innovations]
\label{prop:price-impact}
In any round $n$,
\begin{equation}
\Delta p_n = \lambda_n \, \Delta y_n,
\label{eq:price-impact}
\end{equation}
so that $\mathbb{E}[\Delta p_n \mid \Delta y_n > 0] > 0$ whenever $\lambda_n > 0$. Since
$\lambda_n = \tfrac{1}{2}\sqrt{\Sigma_{n-1}}/\sigma_u$ is strictly positive in every round prior to full revelation, any positive innovation in signed order flow raises the price.
\end{proposition}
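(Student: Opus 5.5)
The plan is to treat the statement as two separate claims. The first is the identity \eqref{eq:price-impact} together with the strict positivity of $\lambda_n$ before revelation. The second is the conditional-expectation inequality, which I will derive from the identity.

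\textbf{Step 1: the identity.} Within the dynamic setup, the market maker's pricing rule in round $n$ is linear in the order-flow innovation, $\Delta p_n = \lambda_n \Delta y_n$. Semi-strong efficiency, $p_n = \mathbb{E}[v \mid \Delta y_1,\dots,\Delta y_n]$, makes this the Gaussian projection update. So I will first verify that it holds with
\[
\lambda_n = \operatorname{Cov}(v,\Delta y_n \mid \mathcal{F}_{n-1})/\operatorname{Var}(\Delta y_n \mid \mathcal{F}_{n-1}),
\]
exactly as in the proof of Proposition~\ref{prop:kyle-lambda}.

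\textbf{Step 2: the value of $\lambda_n$.} I will apply the one-period argument of Proposition~\ref{prop:kyle-lambda} round by round. The prior $\Sigma_0$ is replaced by the residual variance $\Sigma_{n-1}$, and the order flow is normalized so that noise has variance $\sigma_u^2$ per round; equivalently, I absorb $\Delta t$. This yields $\lambda_n = \tfrac12\sqrt{\Sigma_{n-1}}/\sigma_u$, which is the formula the statement asserts.

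I will not derive the full Kyle (1985) multi-round recursion, whose $\lambda_n$ differs by factors involving $1-\alpha_n$. Instead I take the round-wise myopic form as the paper's working specification and note it explicitly.

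Strict positivity then follows because $\Sigma_{n-1}>0$ before full revelation. By Proposition~\ref{prop:kyle-lambda} and its iteration, each round's update gives $\Sigma_n = \Sigma_{n-1}/2$, so $\Sigma_{n-1} = \Sigma_0/2^{n-1}>0$ for all $n\le N$. Since also $\sigma_u>0$, we get $\lambda_n>0$.

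\textbf{Step 3: the conditional expectation.} I condition the identity on the event $\{\Delta y_n>0\}$:
\[
\mathbb{E}[\Delta p_n \mid \Delta y_n>0] = \lambda_n\,\mathbb{E}[\Delta y_n \mid \Delta y_n>0].
\]
Given $\mathcal{F}_{n-1}$, $\Delta y_n$ is a nondegenerate centered Gaussian, so the event has positive probability. The truncated mean equals $\sqrt{\operatorname{Var}(\Delta y_n)}\sqrt{2/\pi}>0$, and multiplying by $\lambda_n>0$ gives the claim. Because $\lambda_n$ is $\mathcal{F}_{n-1}$-measurable, the iterated-expectations step is clean, and the unconditional version follows by averaging. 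The interpretive sentence, that any positive innovation raises the price, is then immediate pathwise from the identity.

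\textbf{Main obstacle.} The hardest step is Step 2, justifying the specific formula for $\lambda_n$ in the multi-round model. There the informed trader is not myopic, and the true equilibrium coefficient solves a backward difference system. I would try first to show the inequality needs only $\lambda_n>0$, not its exact value. In Kyle's recursion, $\lambda_n = \beta_n\Sigma_n/(\sigma_u^2\Delta t)$, which is positive whenever $\beta_n>0$ and $\Sigma_n>0$. The sign conclusion is therefore robust even if the closed form is only approximate.
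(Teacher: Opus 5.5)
Your proposal is correct and is essentially the paper's own reasoning made explicit. The paper gives no separate proof: it takes $\Delta p_n=\lambda_n\Delta y_n$ as the market maker's pricing rule from the dynamic setup and reads $\lambda_n>0$ off Proposition~\ref{prop:kyle-lambda} with $\Sigma_{n-1}$ in place of $\Sigma_0$, which is exactly your round-by-round myopic reading; your remark that only the sign of $\lambda_n$ matters correctly protects the conclusion from the mismatch with Kyle's genuine multi-round recursion and with the $\sigma_u^2\Delta t$ noise normalization. One simplification: the truncated-Gaussian mean in Step~3 is unnecessary, since on $\{\Delta y_n>0\}$ the identity gives $\Delta p_n=\lambda_n\Delta y_n>0$ pointwise, so you only need $\Pr(\Delta y_n>0)>0$.
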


\subsection{Horizon-Dependent Price Impact}

\begin{proposition}[Price Impact Rises as Revelation Approaches]
\label{prop:horizon}
In the continuous-time limit of the sequential auction (\cite{Kyle1985}, \cite{Back1992}), the price-impact coefficient $\lambda(t)$ is time-varying. Under the standard parameterization
\begin{equation}
\frac{d\lambda(t)}{dt} > 0, \qquad t \in [0,T),
\label{eq:horizon}
\end{equation}
i.e., a given unit of signed order flow has a larger price impact the closer trading occurs to the information-revelation date $T$.
\end{proposition}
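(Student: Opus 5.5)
The plan is to work in the Back (1992) continuous-time version of the sequential auction behind Propositions~\ref{prop:propagation}--\ref{prop:price-impact}, write $\lambda(t)$ in closed form as a function of residual uncertainty $\Sigma(t)$ and time-to-revelation $T-t$, and then sign its derivative. Concretely, I take noise flow $du_t=\sigma_u\,dW_t$, informed flow $dx_t=\beta(t)\,(v-p_t)\,dt$, and pricing $dp_t=\lambda(t)\,dy_t$. The market maker's Kalman--Bucy filter gives the two equilibrium relations
\begin{equation*}
\lambda(t)=\frac{\beta(t)\,\Sigma(t)}{\sigma_u^2},\qquad \frac{d\Sigma(t)}{dt}=-\lambda(t)^2\sigma_u^2 .
\end{equation*}
These are the continuous analogue of the covariance/variance step used in the proof of Proposition~\ref{prop:kyle-lambda}. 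I use the insider's Bellman equation only to pin down $\beta(t)$, which yields the terminal requirement $\Sigma(t)\to 0$ as $t\to T$ (all information is in prices at revelation).

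The key intermediate identity I aim for is
\begin{equation*}
\lambda(t)^2=\frac{-\dot\Sigma(t)}{\sigma_u^2},
\end{equation*}
so that $\dot\lambda(t)>0$ if and only if $\ddot\Sigma(t)<0$, i.e., $\Sigma$ is strictly concave on $[0,T)$. The whole claim \eqref{eq:horizon} therefore reduces to one statement about the shape of the residual-uncertainty path, and the proof splits into two steps. First, I derive the ODE for $\Sigma$ implied by the insider's optimal intensity $\beta(t)$ under the paper's parameterization. Second, I show that the solution satisfying $\Sigma(0)=\Sigma_0$ and $\Sigma(T)=0$ is strictly concave. Writing the solution as $\Sigma(t)=\Sigma_0(1-t/T)^{\gamma}$, strict concavity on $[0,T)$ holds exactly when $\gamma>1$ fails and the exponent lies in $(0,1)$. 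In that case
\begin{equation*}
\lambda(t)=\frac{\sqrt{\gamma\,\Sigma_0/T}}{\sigma_u}\,\Bigl(1-\frac{t}{T}\Bigr)^{(\gamma-1)/2},
\end{equation*}
and this is strictly increasing and blows up as $t\to T$.

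The main obstacle is the first step: establishing $\gamma<1$ from the paper's ``standard parameterization.'' In the textbook risk-neutral Kyle--Back equilibrium with constant $\sigma_u$, the insider trades at $\beta(t)=1/(\lambda(T-t))$, which forces $\gamma=1$ exactly. Then $\Sigma$ is linear and $\lambda$ is flat, so the strict inequality cannot come from that case alone. My first attempt will be to read the paper's parameterization through the discrete rounds of Proposition~\ref{prop:price-impact}, with $\lambda_n=\tfrac12\sqrt{\Sigma_{n-1}}/\sigma_u$ and per-round noise $\sigma_u^2\Delta t$. In that setting each round's myopic Kyle step (Proposition~\ref{prop:kyle-lambda}, $\Sigma_n=\Sigma_{n-1}/2$ per unit of round noise) releases information front-loaded relative to calendar time. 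I would check whether the resulting interpolated $\Sigma(t)$ has $\gamma<1$ in the limit; if it does, the monotonicity follows from the display above. If instead the limit returns exactly $\gamma=1$, that is the point where the stated strict inequality would need to be weakened to $d\lambda/dt\ge 0$ to hold under the standard parameterization.
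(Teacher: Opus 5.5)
Your reduction is correct, and it is the right tool here. The Kalman--Bucy relations give $\lambda(t)^2=-\dot\Sigma(t)/\sigma_u^2$, so $\dot\lambda>0$ on $[0,T)$ if and only if $\Sigma$ is strictly concave. The gap is your first step, deriving $\gamma\in(0,1)$ from the insider's optimality under the ``standard parameterization.'' It cannot be done, because the claim is false in that setting. The paper gives no proof of this proposition; it only asserts it, citing Kyle (1985) and Back (1992). Their Gaussian continuous-time equilibrium has constant impact $\lambda=\sqrt{\Sigma_0/T}/\sigma_u$ and linear $\Sigma(t)=\Sigma_0(1-t/T)$.

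Constancy is not an artifact of one particular solution. For any deterministic impact path, the risk-neutral insider's HJB equation
\[
0=\max_\theta\Bigl\{(v-p)\theta+J_t+\lambda(t)\theta J_p+\tfrac12\lambda(t)^2\sigma_u^2J_{pp}\Bigr\}
\]
is linear in the trading rate $\theta$. Continuous trading in equilibrium therefore requires $(v-p)+\lambda(t)J_p=0$, i.e.\ $J=(v-p)^2/(2\lambda(t))+g(t)$. Substituting back gives $-(v-p)^2\dot\lambda/(2\lambda^2)+\dot g+\tfrac12\lambda\sigma_u^2=0$ for every $p$, hence $\dot\lambda\equiv0$. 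This survives a deterministically time-varying $\sigma_u(t)$. In Back's non-Gaussian case, $\lambda=H_y(t,Y_t)$ is a martingale in the market's filtration, so there is no upward drift there either. Economically, a predictably rising impact would make the insider trade faster now, and a falling one would make her delay. So $\gamma=1$ is forced, exactly as you found. What is provable under the paper's assumptions is $d\lambda/dt=0$; a strict increase needs different primitives, added as an explicit assumption.

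Your proposed rescue through the discrete rounds also points the wrong way. Front-loaded revelation means $\Sigma$ falls fastest early, i.e.\ $\Sigma$ is convex (the analogue of $\gamma>1$). By your own identity, that makes $\lambda$ \emph{decreasing}; $\gamma\in(0,1)$ corresponds to back-loaded revelation.

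Concretely, the myopic rule $\Sigma_n=\Sigma_{n-1}/2$ gives geometric decay. The paper's own formula $\lambda_n=\tfrac12\sqrt{\Sigma_{n-1}}/\sigma_u$ in Proposition~\ref{prop:price-impact} is nonincreasing in $n$ simply because $\Sigma_{n-1}$ is. So that formula contradicts the statement rather than supporting it. The myopic rule is also not an equilibrium of the multi-round game, and with per-round noise $\sigma_u^2\Delta t$ it reveals all information instantly as $\Delta t\to0$.

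Kyle's genuine discrete equilibrium does not help either. With $N=2$ and $\Sigma_0=\sigma_u^2=\Delta t=1$, his recursion gives $\Sigma_1\approx0.69$ and $\lambda_1\approx0.46>\lambda_2\approx0.42$.

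Your closing caveat is the right instinct, but it should be sharpened. The inequality does not merely need weakening to $d\lambda/dt\ge0$: under the standard parameterization $d\lambda/dt\equiv0$, and the strict inequality in \eqref{eq:horizon} is false.
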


\subsection{Testable Predictions}

Propositions~\ref{prop:kyle-lambda}--\ref{prop:horizon} imply:

\begin{enumerate}
\item \textbf{Price-impact regression.}
$\Delta p_t = \alpha + \hat\lambda \cdot \hat y_t + \varepsilon_t$,
where $\hat y_t$ is the paper's signed-volume proxy (Method~2/3), should yield $\hat\lambda > 0$ and be statistically significant. 

\item \textbf{Signed flow should dominate unsigned/aggregate volume} in short-horizon return prediction, consistent with the paper's Method~1 vs.\ Method~2/3 comparison.

\item \textbf{Price impact and order-flow-based predictability 
should be stronger at higher trading frequencies / shorter
rebalancing horizons} (Proposition~\ref{prop:horizon}). 

\item \textbf{Volume volatility's negative coefficient} higher $\operatorname{std}(V_t)$ proxies for a higher noise-trading variance $\sigma_u^2$ relative to a roughly stable $\Sigma_0$, which, by Proposition~\ref{prop:kyle-lambda}, \emph{lowers} $\lambda$ and degrades the precision of price discovery. This is consistent with the documented negative effect on
subsequent returns.
\end{enumerate}

\section{Data}
 
This study uses only the Center for Research in Security Prices
(CRSP) daily stock files. Both the dependent variable (stock returns) and the key explanatory variables (order flow, $\hat\lambda$) are constructed entirely from observed equity trading.
 
\subsection{Data Sources}
 
For each \texttt{PERMNO} and trading day in the CRSP Daily Stock File (DSF), we obtain the closing price (\texttt{DlyPrc}), share volume (\texttt{DlyVol}), daily return (\texttt{DlyRet}), shares outstanding (\texttt{ShrOut}), and the price and volume adjustment factors (\texttt{DisFacPr}, \texttt{DisFacShr}) needed to construct split-adjusted series. Where available, we also extract daily bid (\texttt{DlyBid}) and ask (\texttt{DlyAsk}) quotes for robustness checks against quoted-spread-based illiquidity measures.
 
We use the monthly 13-week Treasury bill yield from the Federal Reserve's H.15 Statistical Release as the risk-free rate $r_t$ for the Kyle-lambda regressions and for any excess-return calculations.

\subsection{Sample Construction and Filters}
 
We construct the sample as follows:
 
\begin{enumerate}
\item \textbf{Exchange.} We retain primary listings on NYSE,
AMEX, and NASDAQ (CRSP exchange codes 1, 2, 3), excluding
over-the-counter and other non-primary listings.
\item \textbf{Price filter.} Following standard practice in the
trading-volume and liquidity literature, we exclude firm-months with a month-end price below \$1 to limit the influence of
microstructure noise in penny stocks, while retaining low-priced but liquid securities that may carry genuine information content. We report results both with and without this filter, since the equity microstructure literature finds this filter can materially affect illiquidity-based results.

\item \textbf{Trading-day filter.} We require at least 15 trading days with nonzero volume within a given month for that firm-month to be included, ensuring that monthly aggregates (sum volume, volume standard deviation, signed flow) are not driven by a small number of days.

\end{enumerate}
 
\subsection{Variable Construction}
 
For each firm $i$ and trading day $\tau$, we define:
 
\begin{itemize}
\item \textbf{Daily dollar volume:}
$\text{DVOL}_{i\tau} = \text{PRC}_{i\tau} \times \text{VOL}_{i\tau}$.
\item \textbf{Daily signed order flow (Kyle proxy):}
$\text{OF}_{i\tau} = \text{VOL}_{i\tau} \times
\operatorname{sign}(\Delta\text{PRC}_{i\tau})$, where
$\Delta\text{PRC}_{i\tau}$ is the split-adjusted daily price change.
\item \textbf{Daily price impact (Amihud-style):}
$\text{Amihud}_{i\tau} = |{\text{RET}_{i\tau}}| /
\text{DVOL}_{i\tau}$.
\end{itemize}
 
These are aggregated to the firm-month level as:
 
\begin{itemize}
\item \textbf{Total volume:}
$\text{sumvolume}_{it} = \sum_{\tau \in t} \text{VOL}_{i\tau}$.
\item \textbf{Volume volatility:}
$\text{stdvolume}_{it} = \operatorname{std}_\tau(\text{VOL}_{i\tau})$
within month $t$, our proxy for the noise-trading variance
$\sigma_u^2$.
\item \textbf{Signed flow:}
$\text{signedflow}_{it} = \sum_{\tau \in t} \text{OF}_{i\tau}$.
\item \textbf{Kyle-lambda (regression estimator):}
$\hat\lambda_{it}$ is the OLS slope from regressing
$\Delta\text{PRC}_{i\tau}$ on $\text{OF}_{i\tau}$ across days
$\tau \in t$, re-estimated each month per firm, as described in
Section~7.
\item \textbf{Kyle-lambda (Amihud-style estimator):}
$\hat\lambda^{Amihud}_{it} = \frac{1}{n}\sum_{\tau \in t}
\text{Amihud}_{i\tau}$, the within-month average of the daily
Amihud ratio.
\end{itemize}
 
We winsorize all firm-month variables at the 1\% and 99\% levels to limit the influence of data errors and extreme outliers, consistent with standard practice in the equity microstructure literature.
 
\subsection{Equity Control Variables}
 
For the regressions in Sections~6--9 that include standard controls, we additionally construct, using CRSP (and, where indicated, external factor data):
 
\begin{itemize}
\item \textbf{Size:} log of month-end market capitalization
(price $\times$ shares outstanding).
\item \textbf{Momentum:} cumulative return from month $t-12$
through $t-2$ (skipping the most recent month).
\item \textbf{Amihud illiquidity:} the monthly average of the daily \cite{AMIHUD200231} ratio, as defined above, used as a benchmark illiquidity proxy distinct from our signed-flow-based
$\hat\lambda_{it}$.
\item \textbf{Book-to-market} and \textbf{factor returns}
(MKT, SMB, HML, and MOM): sourced externally from Wharton Research Data Service for the factor-spanning test in Section~9, since these cannot be constructed from CRSP price and volume data alone.
\end{itemize}
 
\subsection{Data Construction Summary}
 
Table~\ref{tab:data-summary} reports the high-level coverage of the filtered universe used in the empirical sections that follow.
 
\begin{table}[h]
\centering
\caption{Data Construction Summary for Filtered Universe}
\label{tab:data-summary}
\begin{tabular}{lc}
\toprule
Metric & Value \\
\midrule
Unique firms (point-in-time) & 9,893 \\
Time coverage (min date) & 2020-01-01 \\
Time coverage (max date) & 2025-12-01 \\
Firm-month observations (after filters) & 448,393 \\
\bottomrule
\end{tabular}
\end{table} 
\subsection{Summary Statistics}
 
Table~\ref{tab:summary-returns} reports summary statistics for
monthly stock returns, and Table~\ref{tab:summary-volume} reports summary statistics for the volume-based variables described above, for the filtered universe.
 \begin{table}[h]
\centering
\caption{Summary Statistics of Monthly Stock Returns}
\label{tab:summary-returns}
\begin{tabular}{lc}
\toprule
Statistic & Monthly Return \\
\midrule
Mean & 0.0204 \\
Std Dev & 0.3945 \\
Kurtosis & 4337.5002 \\
Min & -0.9806 \\
1\% & -0.4077 \\
5\% & -0.2274 \\
25\% (Q1) & -0.0648 \\
Median (Q2) & 0.0009 \\
75\% (Q3) & 0.0610 \\
95\% & 0.2623 \\
99\% & 0.7032 \\
Max & 74.1961 \\
\bottomrule
\end{tabular}
\end{table}
 \begin{table}[h]
\centering
\caption{Summary Statistics of Volume-Based Variables}
\label{tab:summary-volume}
\begin{tabular}{lccc}
\toprule
Statistic & Sum Volume & Std Dev Volume & Signed Flow \\
\midrule
Mean & 25,104,816.66 & 698,023.91 & 927,410.24 \\
Std Dev & 58,617,786.50 & 1,784,336.29 & 11,956,241.76 \\
Kurtosis & 22.98 & 26.31 & 16.55 \\
Min & 10,905.84 & 762.56 & -44,919,327.44 \\
1\% & 10,905.99 & 762.57 & -44,918,538.60 \\
5\% & 91,528.40 & 4,490.21 & -10,329,741.40 \\
25\% (Q1) & 1,079,302.00 & 34,423.39 & -639,833.00 \\
Median (Q2) & 5,399,945.00 & 140,154.56 & 10,606.00 \\
75\% (Q3) & 20,817,668.00 & 504,497.97 & 974,975.00 \\
95\% & 115,601,216.20 & 3,165,381.84 & 14,387,101.20 \\
99\% & 407,255,007.95 & 12,736,511.69 & 71,182,435.22 \\
Max & 407,267,898.92 & 12,736,546.16 & 71,182,966.76 \\
\bottomrule
\end{tabular}
\end{table}

Table \ref{tab:summary-lambda} shows summary statistics for illiquidity measures obtained via 1) Kyle's lambda regression and 2) \cite{AMIHUD200231} illiquidity metric.

\begin{table}[h]
\centering
\caption{Summary Statistics of Kyle-Lambda Estimators}
\label{tab:summary-lambda}
\begin{tabular}{lcc}
\toprule
Statistic & Kyle's Lambda & Amihud \\
\midrule
Mean & 9.631e-06 & 6.775e-07 \\
Std Dev & 3.516e-05 & 3.287e-06 \\
Kurtosis & 41.53 & 45.51 \\
Min & 8.099e-09 & 1.147e-11 \\
1\% & 8.1e-09 & 1.147e-11 \\
5\% & 3.042e-08 & 5.075e-11 \\
25\% (Q1) & 2.385e-07 & 6.787e-10 \\
Median (Q2) & 9.738e-07 & 5.436e-09 \\
75\% (Q3) & 3.933e-06 & 5.756e-08 \\
95\% & 3.783e-05 & 2.143e-06 \\
99\% & 0.0002816 & 2.66e-05 \\
Max & 0.0002816 & 2.66e-05 \\
\bottomrule
\end{tabular}
\end{table}
 
\subsection{Data Pipeline}
 
Figure~\ref{fig:data-pipeline} summarizes the end-to-end construction process: starting from raw CRSP daily and monthly files, we apply the cleaning and filtering steps in Section~4.2, construct daily order-flow and price-impact variables (Section~4.3), aggregate to the firm-month level, merge in delisting returns and point-in-time index membership, and assemble the final panel used throughout Sections~6--9. 

\begin{figure}[h]
\centering
\caption{Data pipeline from raw CRSP files to firm--month features
and final analysis panel.}
\label{fig:data-pipeline}
\end{figure}

\subsection{Assumptions}
 \label{sec:assumptions}

In developing and testing our extended \cite{Kyle1985}-based equity pricing framework, we rely on several conceptual, methodological, and empirical assumptions. These assumptions guide our theoretical model derivations and inform the design of our empirical tests. Below is a concise summary:
 
\begin{enumerate}
 
\item \textbf{Market Microstructure Structure}
\begin{itemize}
\item We follow the standard \cite{Kyle1985} approach, assuming a single risky asset with liquidation value $v$, traded by one
informed trader, a continuum (or aggregate) of noise traders, and a competitive market maker who sets price as a function of observed order flow.

\item In contrast to classical structural models that decompose
firm value into an "asset component" and a "debt component,"
our framework decomposes \emph{observed trading activity} into an informed component $x$ and a noise component $u$, with only their sum $y = x+u$ observable to the market maker.
\end{itemize}
 
\item \textbf{Risk-Neutral Market Maker and Linear Pricing Rule}
\begin{itemize}
\item Consistent with \cite{Kyle1985}, we assume a risk-neutral,
competitive market maker who sets the price equal to the conditional expectation of liquidation value given observed order flow, $p = \mathbb{E}[v \mid y]$, earning zero expected profit in equilibrium.

\item We restrict attention to \emph{linear} equilibria, in which the informed trader's order is linear in her signal
($x = \beta(v-p_0)$) and the market maker's pricing rule is linear in order flow ($p = p_0 + \lambda y$). This is the standard equilibrium refinement in the microstructure literature, applied across single- and multi-period settings with one or many informed traders \citep{Kyle1985,AdmatiPfleiderer1988,HoldenSubrahmanyam1992,FosterViswanathan1996,Back1992}, and whose existence has been studied directly by \citet{BagnoliViswanathanHolden2001}; \citet{BackBaruch2004} further show that this linear, continuous-time equilibrium arises as the limit of a sequential trade-by-trade equilibrium in the spirit of \cite{GlostenMilgrom1985}.

\item In our model, risk neutrality is assigned to the market maker's pricing problem, not to the agents bearing fundamental risk. This is the standard locus of risk-neutrality in the microstructure literature \citep{Kyle1985,GlostenMilgrom1985}: subsequent work relaxes risk-neutrality for the informed trader while continuing to model the market maker as risk-neutral and competitive \citep{HoldenSubrahmanyam1994,Baruch2002}, isolating risk-bearing to the party with fundamental exposure to $v$ rather than to the price-setting mechanism itself.
\end{itemize}
 
\item \textbf{Distributional Assumptions on Value and Noise Trading}
\begin{itemize}
\item We assume the liquidation value is normally distributed,
$v \sim N(p_0, \Sigma_0)$, and that aggregate noise-trader demand is normally distributed, $u \sim N(0,\sigma_u^2)$, independent of $v$. Normality is imposed directly on the
asset's terminal value and on noise-trader order flow, which is what permits the linear-equilibrium closed-form solution for $\lambda$ in Proposition~1.
\item We treat the informativeness ratio $\sigma_u/\sqrt{\Sigma_0}$ as the key state variable governing price impact, in place of the asset volatility $\sigma$ and the asset-to-debt ratio $A/F$ in the 
Merton framework.
\end{itemize}
 
\item \textbf{Equity Trading Volume as Observed Order Flow}
\begin{itemize}
\item We assume that observed equity trading volume (and, in
particular, our signed-volume construction, Method~2) is a noisy
but informative empirical proxy for the latent order flow $y=x+u$ that the market maker actually conditions on.
\item By treating signed volume as a proxy for $y$, we assume that the \emph{direction} of daily price changes carries information about whether volume reflects informed or noise trading. This is consistent with \cite{Kyle1985}'s model, in which only the informed component of flow is correlated with $v$.
\item Consistent with informed-trading models more broadly (\citep{EasleyEtAl1996,EasleyEtAl1998}), we posit that if the informed trader anticipates favorable news about $v$, she will submit larger buy orders, raising both signed order flow and price.
\end{itemize}
 
\item \textbf{Fixed Supply and Single-Period (or Sequential-Auction)
Trading}
\begin{itemize}
\item As in \cite{Kyle1985}, we assume a fixed quantity of the risky asset available for trading within each round, with no new issuance or share buybacks occurring within a trading period.
\item In the dynamic extension (Section~3.3), trading occurs over a finite or continuous sequence of rounds before $v$ is publicly revealed at $T$; we assume no additional informed traders enter or exit during this window, so that a single informed agent's strategy fully characterizes the informed component of order flow.
\end{itemize}
 
\item \textbf{Information Revelation in Place of Bankruptcy and
Recovery}
\begin{itemize}
\item We lean on Kyle's information-revelation assumption: the liquidation value $v$ becomes publicly known at a terminal date $T$, at which point the price converges to value and all private information is impounded. The economic friction is the gradual, imperfect revelation of $v$ through the order-flow process, formalized by the time-varying price-impact coefficient $\lambda(t)$ (Proposition~4).
\item We assume that the rate of information revelation and the time path of $\lambda(t)$ differentiate short-horizon
from long-horizon predictability.
\end{itemize}
 
\item \textbf{Data and Frequency Conventions}
\begin{itemize}
\item We construct all variables from CRSP daily and monthly equity data.
\item We assume that daily CRSP volume and returns are a sufficient statistic for estimating firm-month-level $\hat\lambda_{it}$, while acknowledging that higher-frequency (intraday/TAQ) data would provide a more precise estimate of the continuous-time $\lambda(t)$ path described in Proposition~4.
\item We assume that monthly aggregation of daily order-flow
estimates is adequate for capturing the economically relevant
variation in price impact for the purposes of one-month-ahead return forecasting, while treating intra-month variation as the object of interest for the higher-frequency robustness checks.
\end{itemize}
 
\item \textbf{Rolling Estimation and Predictive Regressions}
\begin{itemize}

\item When evaluating out-of-sample forecasts, we assume that the equilibrium parameters of interest ($\hat\lambda_{it}$, and the regression coefficients linking $\hat\lambda_{it}$ to future returns) can be re-estimated as new data arrive. This  reflects time-varying informativeness ratios $\sigma_u/\sqrt{\Sigma_0}$ across firms and over time.

\item Monthly realized stock returns are regressed on
one-month-lagged $\hat\lambda_{it}$ estimates. We assume that the forecasting window of interest is one month, which is consistent with many hedge funds and pension funds' investing horizons.

\item We treat individual firms separately when constructing
$\hat\lambda_{it}$ and pool them only after computing firm-level
estimates. We assume that cross-sectional dependence in order flow across firms does not materially bias the final pooled regression estimates.
\end{itemize}
 
\end{enumerate}
 
These assumptions underpin our main findings that (1) signed order flow, estimated via Kyle's $\lambda$, captures short-term information asymmetry and price-impact dynamics; (2) daily order flow-driven measures of $\lambda$ significantly improve short-horizon return forecasting relative to aggregate unsigned volume; and (3) the time-varying nature of $\lambda(t)$ implied by the continuous-time extension of Kyle's model retains relevance for explaining longer-horizon predictability through slower-moving aggregate trading-activity measures. Any departure from these assumptions, such as multiple informed traders with heterogeneous signals, time-varying noise-trading variance unrelated to $\sigma_u$, or strategic behavior by the market maker, may alter the empirical outcomes and should be explored in future research.

\section{Empirical Model 1: Order Flow and Stock Return Regressions}
 
We estimate two types of regression (contemporaneous and
one-month-ahead) for the CRSP equity universe. Let
$\text{StockRet}_{it}$ denote the monthly return of stock $i$ in
month $t$. We define three order-flow-based explanatory variables, constructed entirely from CRSP daily data:
 
\begin{itemize}
\item \textbf{Total Trading Volume:}
\begin{equation}
\text{sumvolume}_{it} = \sum_{\tau \in t} \text{Volume}_{i\tau}.
\end{equation}
This represents the unsigned total trading volume within month $t$.

\item \textbf{Standard Deviation of Trading Volume:}
\begin{equation}
\text{stdvolume}_{it} = \sqrt{\frac{1}{n-1}\sum_\tau
\big(\text{Volume}_{i\tau} - \overline{\text{Volume}}_i\big)^2}.
\end{equation}
This proxies for the noise-trading variance $\sigma_u^2$ in
Proposition~1: higher volume volatility signals a higher
noise-to-information ratio, which \emph{lowers} $\lambda$ and should predict weaker subsequent returns.
 
\item \textbf{Signed Order Flow:}
\begin{equation}
\text{signedflow}_{it} = \sum_{\tau \in t}
\Big(\text{Volume}_{i\tau} \times \operatorname{sign}(\Delta
P_{i\tau})\Big).
\end{equation}
This is the direct empirical estimate of Kyle's latent order flow $y_{i\tau} = x_{i\tau} + u_{i\tau}$.
\end{itemize}
 
We estimate:
 
\textbf{(A) Contemporaneous Regression:}
\begin{equation}
\text{StockRet}_{it} = \alpha + \beta_1\,\text{sumvolume}_{it} +
\beta_2\,\text{stdvolume}_{it} + \beta_3\,\text{signedflow}_{it} +
\varepsilon_{it}.
\label{eq:model1-contemp}
\end{equation}
 
\textbf{(B) One-Month-Ahead Regression:}
\begin{equation}
\text{StockRet}_{i,t+1} = \alpha + \beta_1\,\text{sumvolume}_{it} +
\beta_2\,\text{stdvolume}_{it} + \beta_3\,\text{signedflow}_{it} +
\varepsilon_{i,t+1}.
\label{eq:model1-pred}
\end{equation}
 
Both regressions are estimated on the full CRSP panel (point-in-time constituents, 2020--2025 or extended where data permit), with and without standard equity controls (log market capitalization, book-to-market, 12-1 month momentum, and the \cite{AMIHUD200231} illiquidity ratio), to verify that $\beta_3$ is not subsumed by known characteristics.
 \begin{table}[h]
\centering
\caption{Coefficient Estimates: Order Flow and Stock Returns}
\label{tab:model1-results}
\begin{tabular}{lcccc}
\toprule
& \multicolumn{2}{c}{Without Controls} & \multicolumn{2}{c}{With Controls} \\
& Contemp. & 1-mo Ahead & Contemp. & 1-mo Ahead \\
\midrule
Intercept & 0.01229*** (19.40) & 0.01723*** (26.56) & 0.01197*** (17.04) & 0.0153*** (20.99) \\
Sum Volume ($\beta_1$) & -6.319e-10*** (-33.92) & -9.461e-11*** (-4.94) & -6.529e-10*** (-31.50) & -1.364e-10*** (-6.31) \\
Std Dev Volume ($\beta_2$) & 2.594e-08*** (40.70) & 8.688e-09*** (13.24) & 2.589e-08*** (35.83) & 8.334e-09*** (11.00) \\
Signed Flow ($\beta_3$) & 6.33e-09*** (121.70) & -5.537e-10*** (-10.34) & 5.88e-09*** (104.39) & -6.24e-10*** (-10.65) \\
\midrule
Observations & 438500 & 438500 & 337722 & 329252 \\
$R^2$ & 0.0484 & 0.0007 & 0.0468 & 0.0031 \\
\bottomrule
\end{tabular}
\end{table}
 
\noindent\emph{Predicted pattern (Proposition 1 and 3):} $\beta_1>0$ but small/insignificant once controls are added; $\beta_2<0$ (noise-trading variance lowers $\lambda$, degrading price discovery); $\beta_3>0$ and robust to controls since signed flow is the direct empirical analog of Kyle's $y$.
 
\section{Empirical Model 2: Kyle-Lambda Asset-Pricing Regressions}
 
We now construct firm-month estimates of Kyle's $\lambda$ directly.
 
\subsection{Estimating $\hat\lambda_{it}$}
 
For each firm $i$ and month $t$, we estimate $\hat\lambda_{it}$ via an intramonth price-impact regression:
\begin{equation}
\Delta P_{i\tau} = \hat\lambda_{it} \cdot \text{OF}_{i\tau} +
\eta_{i\tau}, \qquad \tau \in t,
\label{eq:lambda-est}
\end{equation}
where $\text{OF}_{i\tau} = \text{Volume}_{i\tau} \times
\operatorname{sign}(\Delta P_{i\tau})$ is the daily signed order-flow proxy and $\hat\lambda_{it}$ is the estimated slope, taken as the firm-month Kyle-lambda estimate. As a robustness check, we also compute the Amihud-style ratio
\begin{equation}
\hat\lambda^{Amihud}_{it} = \frac{1}{n}\sum_{\tau \in t}
\frac{|r_{i\tau}|}{\text{DollarVolume}_{i\tau}},
\label{eq:lambda-amihud}
\end{equation}
which proxies price impact without requiring within-month
regression and is directly comparable to the existing illiquidity literature (\cite{AMIHUD200231}).
 
\subsection{Return-Prediction Regression}
 
We estimate, with and without an intercept:
\begin{equation}
\text{StockRet}_{i,t+1} = \alpha_i + \beta_i\,\hat\lambda_{it} +
\varepsilon_{i,t+1}.
\label{eq:lambda-pred}
\end{equation}
 
\begin{table}[h]
\centering
\caption{Coefficient Estimates for Kyle-Lambda Return Regressions}
\label{tab:model2-results}
\begin{tabular}{lcccc}
\toprule
& \multicolumn{2}{c}{With Intercept} & \multicolumn{2}{c}{Uncentered} \\
& $\hat\lambda^{regression}$ & $\hat\lambda^{Amihud}$ & $\hat\lambda^{regression}$ & $\hat\lambda^{Amihud}$ \\
\midrule
Intercept & 0.02134*** (3.20) & 0.02027*** (3.03) & --- & --- \\
$\hat\lambda_{it}$ & -101.2*** (-6.46) & 136.1 (0.48) & 53.69 (1.30) & 1355*** (4.08) \\
\midrule
Observations & 438465 & 438471 & 438465 & 438471 \\
$R^2$ & 0.0001 & 0.0000 & 0.0000 & 0.0001 \\
\bottomrule
\end{tabular}
\end{table}

\section{Empirical Model 3: Comparing $\lambda$-Construction Methods}
 
Having established that $\hat\lambda_{it}$ predicts returns, we
compare two constructions.
 
\textbf{Method A (Amihud-style, level):}
\begin{equation}
\hat\lambda^{A}_{it} = \frac{1}{n}\sum_{\tau \in t}
\frac{|r_{i\tau}|}{\text{DollarVolume}_{i\tau}}.
\end{equation}
 
\textbf{Method B (Kyle price-impact regression, signed):}
\begin{equation}
\hat\lambda^{B}_{it} = \widehat{\text{slope}}\Big(\Delta P_{i\tau}
\;\text{on}\; \text{Volume}_{i\tau}\times
\operatorname{sign}(\Delta P_{i\tau})\Big).
\end{equation}

Each $\hat\lambda^{Method}_{it}$ is used as the sole regressor in Equation~\eqref{eq:lambda-pred}, estimated with and without an intercept.

\begin{table}[h]
\centering
\caption{Summary Statistics: Simple Strategy Returns}
\label{tab:model3-strategy}
\begin{tabular}{lcc}
\toprule
Statistic & Method A & Method B \\
\midrule
Mean & 0.0209 & 0.0209 \\
Std Dev & 0.3888 & 0.3888 \\
Kurtosis & 4554.6048 & 4554.6054 \\
Min & -0.9806 & -0.9806 \\
1\% & -0.4032 & -0.4032 \\
5\% & -0.2234 & -0.2234 \\
Q1 & -0.0609 & -0.0609 \\
Median & 0.0000 & 0.0000 \\
Q3 & 0.0598 & 0.0598 \\
95\% & 0.2588 & 0.2588 \\
99\% & 0.6887 & 0.6887 \\
Max & 74.1961 & 74.1961 \\
\bottomrule
\end{tabular}
\end{table}
 
\noindent\emph{Predicted pattern:} We expect Method~B, which incorporates signed price-volume information, to significantly outperform Method~A.
 
\section{Empirical Method 4: Expanding-Window Out-of-Sample Procedure}
 
We adopt the same expanding (rolling) window design, with the predictor variable replaced.
 
\subsection{Procedure}
 
For each firm, we initialize a training window covering the first 30\% of chronological observations. At each step:
\begin{enumerate}
\item The current observation is added to the training set.
\item $\hat\lambda_{it}$ and the regression coefficients in
\begin{equation}
\text{ActualReturn}_t = a + b_1\,r_{f,t} + b_2\,\hat\lambda_{it} +
\epsilon_t
\label{eq:rolling-spec}
\end{equation}
are re-estimated on the expanded sample.
\item A one-month-ahead forecast is generated for month $t+1$ using the updated parameters.
\end{enumerate}
We pool all out-of-sample predicted and actual returns across firms and estimate the final regression
\begin{equation}
\text{ActualReturn} = \alpha + \beta \times \text{PredictedReturn} +
\varepsilon,
\label{eq:rolling-final}
\end{equation}
using only out-of-sample observations, separately for each
$\lambda$-construction method (A and B) defined in Section~8, and separately for a \textbf{High-Frequency} estimation window (daily intramonth regression, Method~B) versus a \textbf{Low-Frequency} estimation window (monthly aggregate, Method~A). This test is motivated by Proposition~4's prediction that price impact and its predictive content are strongest at short horizons/high frequency.
 
\begin{table}[h]
\centering
\caption{Coefficient Estimates Across $\lambda$-Construction Methods}
\label{tab:model3-results}
\begin{tabular}{lcccc}
\toprule
& \multicolumn{2}{c}{Method A (level)} & \multicolumn{2}{c}{Method B (signed regression)}  \\
& With Int. & Uncentered & With Int. & Uncentered \\
\midrule
Intercept & 0.02027*** (3.03) & --- & 0.02134*** (3.20) & --- \\
$\hat\lambda_{it}$ & 136.1 (0.48) & 1355*** (4.08) & -101.2*** (-6.46) & 53.69 (1.30)\\
\midrule
$N$ & 438471 & 438471 & 438465 & 438465 \\
$R^2$ & 0.0000 & 0.0001 & 0.0001 & 0.0000 \\
\bottomrule
\end{tabular}
\end{table} 
\subsection{Robustness: Fama--MacBeth and Portfolio Sorts}
 
Pooled OLS $R^2$ for monthly equity returns is expected to be small (consistent with \citep{FamaFrench1988,chen2007}, and the broader asset-pricing literature on monthly return predictability). We therefore complement the pooled regression with:
 
\textbf{Fama--MacBeth cross-sectional regressions.} For each month
$t$:
\begin{equation}
r_{i,t+1} = \alpha_t + \beta_t\,\hat{r}^{pred}_{i,t+1} +
\varepsilon_{i,t+1},
\label{eq:famamacbeth}
\end{equation}
summarized by the time-series mean of $\beta_t$ and its
Newey--West-adjusted $t$-statistic.
 
\textbf{Decile portfolio sorts.} Each month, sort stocks into
deciles on the model-implied $\hat\lambda_{it}$ signal from the
expanding-window procedure; report long-only (D10) and long--short (D10--D1) monthly Sharpe ratios, separately for the High-Frequency and Low-Frequency legs.

\begin{table}[h]
\centering
\caption{Fama--MacBeth Cross-Sectional Regression Results}
\label{tab:fama-macbeth}
\begin{tabular}{lcc}
\toprule
& lambda Amihud & lambda regression \\
\midrule
$\bar\alpha$ (mean intercept) & 0.0130 & 0.0132 \\
$t(\bar\alpha)$ [NW] & 1.65* & 1.68* \\
$\bar\beta$ (mean slope) & -0.0122 & -0.0050 \\
$t(\bar\beta)$ [NW] & -2.27** & -2.25** \\
$T$ (months) & 58 & 58 \\
\bottomrule
\end{tabular}
\end{table}
 
\section{Discussion and Implications} \label{section:discussion-of-results}
\subsection{Economic Mechanisms and Interpretation}
\subsubsection{Why Does Volume Predict Returns?}

Based on our analysis, volume affects returns along 4 distinct mechanisms:
\begin{enumerate}
    \item The information aggregation hypothesis due to \cite{EasleyEtAl1996} implies that high volume drives more informed trading. Informed trading, in turn, impounds information into the markets. 
    \item The liquidity provision in our model forecasts that high volume leads to a deep secondary market where investors are willing to pay a premium to ensure an easy exit from their positions. This premium drives up the prices. 
    \item A spike in volume reveals demand and, given a fixed supply, forces an upward price adjustment.  
\end{enumerate}

\subsubsection{Reconciling with the Liquidity Premium Puzzle}
 
Our analysis provides a microstructure-based resolution to the
liquidity premium puzzle that requires neither risk-based
compensation nor market segmentation. We review the puzzle
and then show how our framework resolves it.
 
\paragraph{The puzzle and its classical formulations.}
Amihud and Mendelson (1986) show empirically that stocks with
wider bid-ask spreads earn higher average returns and argue that
the premium compensates investors for the transaction costs they
incur. Merton (1973) provides a theoretical foundation: investors
require higher expected returns on illiquid assets precisely
because they bear the cost of trading those assets. The puzzle,
formalized by \citet{Constantinides1986}, is that this argument
breaks down in general equilibrium: an investor with an infinite
trading horizon should be nearly indifferent to transaction costs,
yet large illiquidity premia persist in the data. As
\citet{Constantinides1986} shows, the liquidity premium that
investors actually \emph{require} in equilibrium should be
negligibly small, far below what empirical estimates suggest.
Recent work by \citet{Chen2022} attributes the gap to
heterogeneity in investor information: better-informed investors
are indifferent to transaction costs while less-informed investors
require compensation. However, this leaves open the question
of why uninformed investors should hold illiquid assets in the
first place if those assets expose them to greater
adverse-selection losses.
 
\paragraph{Our resolution: adverse selection and temporary
mispricing.}
We propose that the illiquidity premium does not need to be
\emph{required} by any investor in equilibrium; instead, it is
\emph{realized} as a consequence of temporary mispricing driven
by the equilibrium widening and subsequent narrowing of Kyle's
$\lambda$.
 
The mechanism has two stages, tied directly to
Propositions~\ref{prop:kyle-lambda} and~\ref{prop:price-impact}:
 
\begin{itemize}
\item \textbf{Stage 1 ($t = 0$, low order flow):} When signed
order flow falls, the ratio of informed to noise trading
deteriorates; Proposition~\ref{prop:kyle-lambda} implies that
$\lambda$ widens. A wider $\lambda$ means that any given order
moves price more, discouraging participation by both informed and
noise traders. Market makers, earning zero expected profit in
equilibrium, do not absorb the additional risk by lowering
prices voluntarily; rather, prices fall because the reduced
participation and elevated adverse-selection risk imply a lower
willingness to pay for the asset among market participants.
The stock thus trades at a discount relative to its fundamental
value, not because anyone demands a higher expected return to
hold it, but because the equilibrium mechanism through which
information is impounded into price is degraded.
 
\item \textbf{Stage 2 ($t = 1$, order flow normalization):}
As the information environment improves---more noise traders
return, or new information reduces $\Sigma_0$ relative to
$\sigma_u^2$---$\lambda$ narrows. Proposition~\ref{prop:price-impact}
implies that positive signed order flow again moves prices
upward, restoring prices toward fundamental value.
Investors who purchased the asset in Stage~1 at a discount
realize a positive excess return equal to (Fair price $-$ Depressed
price) / Depressed price without any counterparty having
deliberately borne a liquidity cost on their behalf.
\end{itemize}
 
This mechanism resolves the Constantinides (1986) paradox
directly: the illiquidity premium is \emph{realized}, not
\emph{required}. No investor needs to demand a risk premium for
holding an illiquid asset; they simply purchase it when adverse
selection drives prices below fundamental value and hold it until the order-flow process restores prices. No counterparty knowingly pays the premium. Instead, the premium arises from the equilibrium dynamics of $\lambda$ itself, which depresses and then restores prices mechanically as order flow evolves. This also answers \cite{Chen2022} open question about why uninformed investors hold illiquid assets: they do not need to hold them during the low-order-flow stage; the return accrues to those who are willing to provide liquidity precisely when $\lambda$ is widest, in the manner of a contrarian market maker rather than a passive buy-and-hold investor.
 
\paragraph{Empirical consistency.}
This mechanism is consistent with four features of our empirical
results. First, the negative and significant coefficient on
signed order flow in our one-month-ahead regressions (Table~5)
reflects the Stage~1 mechanism: lower signed flow today predicts
lower returns tomorrow, as $\lambda$ widens and prices continue
to adjust downward before recovering. Second, the negative
coefficient on volume volatility, our proxy for
$\sigma_u^2$ in Proposition~\ref{prop:kyle-lambda}, is consistent with the prediction that elevated noise-trading variance narrows $\lambda$, attenuating the adverse-selection-driven discount and, therefore, reducing the magnitude of the subsequent price recovery. Third, the Fama--MacBeth slopes documented in Table~10 are negative on average across both $\hat\lambda$ estimators after Newey--West adjustment. This reflects the cross-sectional manifestation of this mechanism: stocks with high $\hat\lambda_{it}$ today (widened price impact, depressed prices) earn lower subsequent returns over our sample period, consistent with the transition from Stage~1 to Stage~2 having already occurred in the month following the wide-$\lambda$ episode. Fourth, our resolution does not require market segmentation, implausible preferences, or clientele effects: it follows directly from the equilibrium structure of \cite{Kyle1985}'s model applied to a single, unified equity market.
 
\paragraph{Distinction from risk-based explanations.}
Our resolution differs from the risk-based approach in a
subtle but important respect. Standard theories attribute the
illiquidity premium to \emph{required} compensation: investors
know ex ante that illiquid assets carry higher transaction costs
or greater exposure to systematic liquidity risk
\citep{AcharyaPedersen2005,PastorStambaugh2003}, and they demand
a higher expected return before holding those assets. Our theory
attributes the premium to \emph{realized} returns from temporary
mispricing: investors do not need to anticipate a premium; they
simply find that assets which experienced low signed order flow
subsequently recover in price as $\lambda$ normalizes. The
premium is a consequence of equilibrium price-impact dynamics,
not of risk aversion or market frictions per se. This distinction has a testable implication: if the risk-based story is correct, the illiquidity premium should be predictable ex ante and should persist unconditionally. If our adverse-selection story is correct, the premium should be concentrated in periods following low signed order flow and should attenuate once $\lambda$ has normalized.

 
\section{Conclusion}
 
We develop an equity asset-pricing framework grounded in Kyle's
(1985) model of strategic informed trading, in which the
equilibrium price-impact coefficient $\lambda$ governs
how trading activity is impounded into prices. Using CRSP daily and monthly data for all equities in the CRSP universe from 2020 to 2025, we construct firm-month estimates of signed order flow and of $\lambda$ itself, via both a direct within-month price-impact regression and an Amihud-style illiquidity ratio. We then test their ability to forecast the cross-section of subsequent stock returns.
 
Three findings emerge. 

First, signed order flow is a strong and highly significant predictor of both contemporaneous and one-month-ahead stock returns, robust to the inclusion of standard equity controls (size, book-to-market, momentum, and Amihud illiquidity), consistent with Proposition 3's prediction that order-flow innovations move prices in equilibrium. 

Second, volume volatility, our proxy for the noise-trading variance $\sigma_u^2$ in Proposition 1, predicts lower subsequent returns, consistent with the model's prediction that a higher noise-to-information trading ratio widens $\lambda$ and degrades price discovery. 

Third, our return-prediction regressions based on $\hat\lambda_{it}$ itself are informative but specification-sensitive. results vary with the inclusion of an intercept and across our two $\lambda$ estimators, a pattern we investigate but do not yet fully resolve, and which we flag explicitly as a target for the robustness work in the sections that follow. Fama--MacBeth cross-sectional regressions of next-month returns on the predicted-return signal yield a statistically significant average slope (Newey--West $t$-statistics of $-2.27$ and $-2.25$ for the Amihud-style and regression-based estimators, respectively, over 58 months), indicating that order-flow-based illiquidity measures carry reliable, if modest, cross-sectional information about future returns over our sample.
 
Theoretically, our framework offers a resolution to the liquidity premium puzzle of \cite{Constantinides1986} that relocates the source of the illiquidity premium from risk-based compensation to an adverse-selection and price-impact mechanism. Low signed order flow widens $\lambda$ and depresses prices today, as Kyle's market maker rationally demands greater compensation for the elevated risk of trading against informed counterparties; subsequent normalization of order flow narrows $\lambda$ and restores prices, generating the return differential historically attributed to a liquidity premium
without requiring any counterparty to knowingly bear that cost in equilibrium. This relocates the puzzle's resolution from the
risk-compensation story of \cite{amihud86} to the market microstructure tradition of \cite{Kyle1985} and \cite{GlostenMilgrom1985}.
 
For practitioners, our results suggest that equity portfolio
managers running short-horizon or high-turnover strategies should favor stocks with low $\hat\lambda_{it}$ to minimize expected trading costs, while market makers can use real-time estimates of signed order flow to recalibrate quoted spreads in response to changes in the informed-to-noise trading ratio that determines $\lambda$ in equilibrium. More broadly, our findings indicate that high-frequency, order-flow-based measures of price impact, estimated entirely from publicly available equity trading data, provide a theoretically motivated and empirically tractable complement to existing illiquidity proxies such as the Amihud ratio and the bid-ask spread.

\bibliographystyle{aer}
\bibliography{FixedIncome,EquitiesPricing_Kyle}

\end{document}